\documentclass[runningheads]{llncs}
\usepackage[T1]{fontenc}
\usepackage{graphicx}
\usepackage{amssymb}
\usepackage{amstext}
\usepackage{amsmath}

\usepackage{amsthm}
\usepackage{multicol}
\usepackage{mathrsfs}
\usepackage{mathtools}
\usepackage{nccmath}
\usepackage{setspace}
\usepackage{authblk}
\usepackage{fancybox}
\usepackage{dashbox}
\usepackage{enumitem}
\usepackage{mathcomp}

\theoremstyle{plain}
\newtheorem{df}{Definition}
\newtheorem{thm}{Theorem}

\newcommand{\dmid}{||}

\AtBeginDocument{
  \abovedisplayskip     =0.25\abovedisplayskip
  \abovedisplayshortskip=0.25\abovedisplayshortskip
  \belowdisplayskip     =0.25\belowdisplayskip
  \belowdisplayshortskip=0.25\belowdisplayshortskip
  \setlength{\jot}{2pt}
  \setlength{\topsep}{2pt}
  \setlength{\partopsep}{0pt}
  \setlength{\parsep}{0pt}
  \setlength{\itemsep}{0pt}}

\setlist[itemize]{topsep=1pt,parsep=0pt,partopsep=0pt,itemsep=1.5pt}
\setlist[description]{topsep=1pt,parsep=0pt,partopsep=0pt,itemsep=0pt}

\begin{document}
\title{On Translating Epistemic Operators in a Logic of Awareness}
%
%
\author{Yudai Kubono\orcidID{0000-0003-2617-8870}}
\authorrunning{Y. Kubono}
%
\institute{Graduate School of Science and Technology, Shizuoka University, Ohya, Shizuoka 422-8529, Japan 
\email{yudai.kubono@gmail.com}}
\maketitle              
\begin{abstract}
 Awareness-Based Indistinguishability Logic (henceforth, AIL) is an extension of Epistemic Logic by introducing the notion of awareness, distinguishing explicit knowledge from implicit knowledge. In this framework, each of these notions is represented by a modal operator. On the other hand, HMS models, developed in the economics literature, also provide a formalization of those notions. Nevertheless, the behavior of the epistemic operators in AIL within HMS models has yet to be explored. 
 In this paper, we define a transformation of an AIL model into an HMS model and then prove that a translation between the fragments of the language of AIL preserves truth under this transformation. As a result, we clarify the semantic role of an epistemic operator in AIL, which is induced by awareness and is essential to defining explicit knowledge, within HMS models. Furthermore, we demonstrate the differences in the implicit knowledge captured by AIL and HMS models. This work lays the groundwork for a comparative analysis between the model classes.
\keywords{Awareness logic \and Epistemic logic \and Multi-agent systems.}
\end{abstract}
\section{Introduction}
Awareness-Based Indistinguishability Logic \cite{yudai2025-2} (henceforth, AIL) is an extension of Epistemic Logic (henceforth, EL), a variant of modal logic, by introducing the notion of \textit{awareness}, distinguishing \textit{explicit knowledge} from \textit{implicit knowledge}. In this framework, each of these notions, along with other epistemic notions to define explicit knowledge, is represented by a modal operator. AIL excludes specific undesirable formulae that are valid in the previous study \cite{fagin1988belief}. On the other hand, HMS models, first proposed in \cite{heifetz2006interactive}, also provide a formalization of awareness and explicit knowledge. Following the economic tradition, this structure is in the event-based approach \cite{fagin1995reasoning}, which employs operators on events represented as subsets of states (or worlds), unlike EL and AIL.

Nevertheless, the behavior of the epistemic operators in AIL within HMS models has yet to be explored. In this paper, we define a transformation of an AIL model into an HMS model and then prove that a translation between the fragments of the language of AIL preserves truth under this transformation. Via the translation, we clarify the semantic role of an epistemic operator in AIL, which is induced by awareness and is essential to defining explicit knowledge, within HMS models. Furthermore, we demonstrate the differences in the implicit knowledge captured by AIL and HMS models. 

\section{Preliminaries}
We introduce two languages that are fragments of the AIL language \cite{yudai2025-2}. We focus only on these fragments in this paper.
\begin{df}
 Let $\mathcal{P}$ be a countable set of atomic propositions and $\mathcal{G}$ a finite set of agents. The base language $\mathcal{L}_{PL}(\mathcal{P})$ is the set of formulae generated by the following grammar: 
 \begin{align*}
  &\mathcal{L}_{PL}(\mathcal{P}) \ni\varphi::= p \mid \neg\varphi \mid \varphi\wedge\varphi,
 \end{align*}
 where $p\in\mathcal{P}$. Other logical connectives $\vee$, $\to$, and $\leftrightarrow$ are defined in the usual manner. 
 Furthermore, the languages $\mathcal{L}^*_{HMS}(\mathcal{P,G})$ and $\mathcal{L}^*_{AIL}(\mathcal{P,G})$ are the sets of formulae generated by the following grammars: 
 \begin{align*}
  &\mathcal{L}^*_{HMS}(\mathcal{P,G}) \ni\psi::= \varphi \mid A_i \varphi \mid I_i\varphi;\\
  &\mathcal{L}^*_{AIL}(\mathcal{P,G}) \ni\psi::= \varphi \mid A_i \varphi \mid [\mathop{\approx}]_iI_i[\mathop{\approx}]_i\varphi,
 \end{align*}
 where $\varphi\in \mathcal{L}_{PL}(\mathcal{P})$ and $i\in\mathcal{G}$.
\end{df}
\noindent
We call the formulae in $\mathcal{L}^*_{AIL}$ AIL formulae. Furthermore, to avoid confusion, we sometimes attach superscripts indicating the languages to operators in formulae, such as $A^{HMS}_i$ and $A^{AIL}_i$. The operators $A$ and $I$ represent awareness and implicit knowledge, respectively. The other operator $[\mathop{\approx}]_i$ is a technical background one and is used to define explicit knowledge in \cite{yudai2025-2}.

\subsection{The Models in AIL}
A Model of AIL is presented in Kripke-style. We often call it an AIL model.
\begin{df}[\cite{yudai2025-2}]
  An \textit{epistemic model with awareness} $M$ is a tuple $\langle W, \{\mathop{\sim_i}, \allowbreak \mathscr{A}_i\}_{i\in\mathcal{G}}, V\rangle$, where:
     \begin{itemize}
      \item $W \text{ is a non-empty set of possible worlds}$;
      \item $\mathop{\sim_i}
      \text{ is an equivalence relation on }W$;
      \item $\mathscr{A}_i: W \to 2^{\mathcal{P}}$ is an awareness function satisfying that if $(w,v)\in\mathop{\sim_i}$, then $\mathscr{A}_i(w) = \mathscr{A}_i(v)$;
      \item $V : \mathcal{P} \to 2^{W}$ is a valuation.
     \end{itemize}
   \end{df}
\noindent
\begin{df}[\cite{yudai2025-2}]
  For each $i\in\mathcal{G}$, \textit{A-equivalence relation} $\mathop{\approx_i}$ on $W$ is defined by $(w,v) \in \mathop{\approx_i} \textit{\  iff,\ }$ $\mathscr{A}_i(w) = \mathscr{A}_i(v)$ and $w \in V(p) \textit{\ iff \ } v \in V(p) \text{ for every } p \in \mathscr{A}_i(w)$.
\end{df}
\noindent

We provide the satisfaction relation in our language $\mathcal{L}^*_{AIL}$. Let $At(\varphi)$ denote the set of atomic propositions occurring in $\varphi$.
\begin{df}[\cite{yudai2025-2}]
  For each epistemic model with awareness $M$ and possible world $w \in W$, the satisfaction relation $\vDash_{AIL^*}$ is given as follows: 
  \begin{align*}
    M,w \vDash_{AIL^*} p &\textit{\  iff  \ } w \in V(p);\\ 
    M,w \vDash_{AIL^*} \neg \varphi &\textit{\  iff  \ } M,w \nvDash_{AIL^*}\varphi;\\ 
    M,w \vDash_{AIL^*} \varphi\wedge\psi & \textit{\  iff  \ } M,w\vDash_{AIL^*}\varphi \text{, and } M,w\vDash_{AIL^*}\psi ; \\
    M,w \vDash_{AIL^*} A_{i} \varphi &\textit{\  iff  \ } At(\varphi) \subseteq \mathscr{A}_{i}(w);\\
    M,w \vDash_{AIL^*} I_i\varphi &\textit{\  iff  \ } M,v\vDash_{AIL^*} \varphi \text{ for all } v \text{ such that }(w,v)\in \mathop{\sim_i};\\
    M,w\vDash_{AIL^*} [\approx]_{i}I_i[\approx]_{i}\varphi &\textit{\  iff  \ } M,v\vDash_{AIL^*}\varphi \text{ for all } v \text{ such that }(w,v)\in \mathop{\approx_{i}}\circ \mathop{\sim_i}\circ \mathop{\approx_{i}},
  \end{align*}
  where $\mathop{\sim_i}\circ \mathop{\approx_{i}}$ is the sequential composition of $\mathop{\approx_{i}}$ and $\mathop{\sim_i}$.
\end{df}

\subsection{Implicit Knowledge-based HMS models}
In \cite{belardinelli2024implicit}, the authors defined \textit{implicit knowledge-based HMS models} (henceforth, iHMS models) based on the original HMS models \cite{heifetz2006interactive}. Since the models are structurally closer to AIL models, which makes them easier to compare, we focus on iHMS models in this paper.

An iHMS model is a tuple $\langle\{S_{\Phi}\}_{\Phi\subseteq\mathcal{P}},(r^{\Phi}_{\Psi})_{\Psi\subseteq\Phi\subseteq\mathcal{P}},\allowbreak(\Lambda_i,\alpha_i)_{i\in\mathcal{G}},v\rangle$ consisting of a non-empty collection of non-empty disjoint state spaces; a family of projections; implicit possibility correspondences; awareness functions; and a valuation. We do not reproduce the formal definitions here; for the detailed definition, the assumptions required for these models, and the definition of operators on events, see \cite{belardinelli2024implicit}. We use iHMS models as the target models into which $\mathcal{L}^*_{AIL}$ is translated. Furthermore, we introduce our own notations such as $e_{A_i}(E)$ and $e_{I_i}(E)$ for the awareness and implicit knowledge operators, where $E$ is an event. 

\section{Interpreting AIL Formulae in HMS Models via a Translation}
We define the HMS-transform of an AIL model. In this transform, partitions by A-equivalence relations and mappings between them serve as state-spaces and projections in iHMS models.
\begin{df}
  Let $M$ be an epistemic model with awareness such that $\mathscr{A}_i(w) = \mathscr{A}_i(v)$ for every $i\in\mathcal{G}$ and for all $w,v\in W$. The HMS-transform of $M$ is a tuple $t^{M\rightarrow}(M) = \langle \{W_{\Phi}\}_{\Phi\subseteq\mathcal{P}},(m^{\Phi}_{\Psi})_{\Psi\subseteq\Phi\subseteq\mathcal{P}}, (\Lambda^*_i, \alpha^*_i)_{i\in\mathcal{G}},v^*\rangle$: 
  \begin{itemize}
    \item $W_{\Phi} \coloneqq W/\approx_{\Phi}$, where $\approx_{\Phi}$ is defined by $(w,v) \in \mathop{\approx_{\Phi}} \text{\  iff,\ }$ $w \in V(p) \text{\ iff \ } v \in V(p)$ for every $p \in \Phi$. We denote $\bigcup_{\Phi\subseteq\mathcal{P}}W_{\Phi}$ as $W^*$, an element of $W^*$ as $w^*$, and an equivalence class including $w$ in $W_{\Phi}$ as $[w]_{\Phi}$;
    \item $m^{\Phi}_{\Psi}: W_{\Phi}\to W_{\Psi}$ is defined by $m^{\Phi}_{\Psi}([w]_{\Phi}) = [w]_{\Psi}$.
    \item $\Lambda^*_i: W^* \to 2^{W^*}\setminus \{\emptyset\}$ is defined by $[v]_{\Phi}\in\Lambda_i([w]_{\Phi}) \text{ iff } (w',v')\in\mathop{\sim}_i$ for some $w'\in[w]_{\Phi} \text{ and } v'\in[v]_{\Phi}$, $\text{if }\Phi = \mathcal{P}$; and by $\Lambda_i([w]_{\Phi}) = \{m^{\mathcal{P}}_{\Phi}(w^*)\mid w^*\in \Lambda_i([w]_{\mathcal{P}})\}$ otherwise.
    \item $\alpha^*_i: W^*\to \{W_{\Phi}\}_{\Phi\subseteq\mathcal{P}}$ is defined by $W_{\Psi} = \alpha^*_i([w]_{\Upsilon})$ iff $\Psi = \mathscr{A}_i(w) \cap \Upsilon$.
    \item $v^*: \mathcal{P}\to 2^{W^*}$ is defined by $v^*(p) = \bigcup_{\{p\}\subseteq\Phi\subseteq\mathcal{P}} \{[w]_{\Phi} \mid w\in V(p)\}$. 
  \end{itemize}
\end{df}
\noindent
An iHMS model has a space associated with a set of atomic propositions that does not correspond to any agent's awareness set, unlike a partition by an A-equivalence relation. Hence, we define each space by partitioning $W$ with $\mathop{\approx}_{\Phi}$ for each $\Phi\subseteq \mathcal{P}$.


We now show that any HMS-transform is an iHMS model. 
\begin{lemma}
  For any AIL model $M$, the HMS-transform $t^{M\rightarrow}(M)$ is an implicit knowledge-based HMS model.
\end{lemma}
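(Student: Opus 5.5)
The plan is to check the defining conditions of an iHMS model from \cite{belardinelli2024implicit} one at a time for $t^{M\rightarrow}(M)$. Throughout, $M$ satisfies the hypothesis of the HMS-transform: each agent's awareness set is constant across $W$, so we write $\mathscr{A}_i$ for it. The conditions come in three groups: (i) the lattice of state spaces and the projections, (ii) the valuation, and (iii) the possibility correspondences $\Lambda^*_i$ and awareness functions $\alpha^*_i$, including the projection-preservation and confinement conditions.

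\textbf{Group (i).} Each $W_\Phi = W/\approx_\Phi$ is a non-empty partition of $W$, since $W$ is non-empty. To make the spaces disjoint, we tag each class with its index, that is, we regard $[w]_\Phi$ as the pair $(\Phi,[w]_\Phi)$. This matters because $W/\approx_\Phi$ and $W/\approx_\Psi$ can coincide as sets. Each $m^\Phi_\Psi$ is well defined because $\Psi\subseteq\Phi$ implies $\approx_\Phi\subseteq\approx_\Psi$. It is surjective, $m^\Phi_\Phi$ is the identity, and $m^\Phi_\Psi\circ m^\Upsilon_\Phi = m^\Upsilon_\Psi$ holds immediately from the definition on representatives.

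\textbf{Group (ii).} The valuation $v^*(p)$ is well defined because $p\in\Phi$ means $V(p)$ is a union of $\approx_\Phi$-classes. The compatibility requirement with projections is that for $p\in\Psi\subseteq\Phi$ we have $[w]_\Phi\in v^*(p)$ iff $m^\Phi_\Psi([w]_\Phi)\in v^*(p)$; both sides reduce to $w\in V(p)$.

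\textbf{Group (iii), $\Lambda^*_i$.} I read the $\Lambda_i$ in the definition as $\Lambda^*_i$.
\begin{itemize}
\item \emph{Reflexivity, transitivity, Euclideanness on $W_{\mathcal{P}}$.} Since $\approx_{\mathcal{P}}$ identifies worlds agreeing on all atoms, these do not transfer automatically from $\sim_i$. The plan is to check that if $[v]_{\mathcal P}\in\Lambda^*_i([w]_{\mathcal P})$, then $\Lambda^*_i([v]_{\mathcal P})=\Lambda^*_i([w]_{\mathcal P})$.
\item \emph{Potential obstacle.} Because the witnesses $w',v'$ are chosen existentially, this equality can fail when two $\approx_{\mathcal P}$-equivalent worlds lie in different $\sim_i$-cells. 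Transitivity of the induced relation is then not obvious.
\item \emph{First remedy.} Try to show that the iHMS conditions only require reflexivity together with the projection and confinement conditions, not full partitional structure on the upper space. If so, reflexivity is immediate by taking $v'=w'$.
\item \emph{Fallback.} Otherwise, note that the truth-preservation result later in the paper needs only the sets $\Lambda^*_i$, and restrict to, or quotient $M$ to, models in which distinct worlds have distinct valuations. Such models are bisimilar for $\mathcal{L}^*_{AIL}$, since awareness is constant.
\item \emph{Lower spaces.} Here $\Lambda^*_i$ is defined as the projection of the $\mathcal P$-level value, so "projections preserve knowledge", $\Lambda^*_i(m^{\mathcal P}_\Phi(w^*)) = m^{\mathcal P}_\Phi(\Lambda^*_i(w^*))$, holds by construction. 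Its general form for $\Psi\subseteq\Phi$ follows from composition of projections.
\item \emph{Well-definedness on lower spaces.} If $[w]_\Phi=[u]_\Phi$, I need the projected sets from $[w]_{\mathcal P}$ and $[u]_{\mathcal P}$ to agree. This is the step I expect to be the main obstacle, and I would handle it by the same valuation-injectivity reduction as in the fallback above.
\end{itemize}

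\textbf{Group (iii), $\alpha^*_i$.} $\alpha^*_i([w]_\Upsilon)=W_{\mathscr{A}_i\cap\Upsilon}$ lies in the lattice below the space of $[w]_\Upsilon$, which is the required confinement. The remaining conditions are checked as follows:
\begin{itemize}
\item \emph{Compatibility of $\alpha^*_i$ with projections.} For $\Psi\subseteq\Upsilon$, the space $\alpha^*_i(m^\Upsilon_\Psi([w]_\Upsilon))=W_{\mathscr A_i\cap\Psi}$ equals the meet of $\alpha^*_i([w]_\Upsilon)$ and $W_\Psi$. This is what iHMS requires.
\item \emph{Constancy on $\Lambda^*_i$-cells.} It holds trivially because $\mathscr A_i$ is constant.
\end{itemize}
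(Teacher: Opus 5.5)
The gap is in $\Lambda^*_i$, and your proposed repairs do not close it. Your checks of the projections, the valuation and $\alpha^*_i$ are fine. The tagging remark for disjointness is a real point that the paper's one-line proof (``follows from the construction'') passes over.

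\textbf{Lower spaces.} You plan to handle the lower spaces with the same valuation-injectivity reduction, but injectivity is irrelevant there. The trouble is that $[w]_\Phi=[u]_\Phi$ does not force the $\Phi$-projections of the $\mathop{\sim_i}$-cells of $w$ and $u$ to agree. For a counterexample, take:
\begin{itemize}
\item $\mathcal P=\{p,q\}$ and $W=\{a,b,c\}$;
\item $a\in V(p)\cap V(q)$, $b\in V(p)\setminus V(q)$, $c\in V(q)\setminus V(p)$;
\item $\mathop{\sim_i}$-cells $\{a,c\}$ and $\{b\}$, and any constant $\mathscr A_i$.
\end{itemize}
Valuations are pairwise distinct, so $\approx_{\mathcal P}$ is the identity. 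Yet $[a]_{\{p\}}=[b]_{\{p\}}$, while
\[
m^{\mathcal P}_{\{p\}}(\Lambda^*_i([a]_{\mathcal P}))=\{[a]_{\{p\}},[c]_{\{p\}}\}\neq\{[b]_{\{p\}}\}=m^{\mathcal P}_{\{p\}}(\Lambda^*_i([b]_{\mathcal P})).
\]
So $\Lambda^*_i$ is not well defined on $W_{\{p\}}$. No value on that class can satisfy projections-preserve-implicit-knowledge for both $[a]_{\mathcal P}$ and $[b]_{\mathcal P}$. The step you call the main obstacle does not just need care; it fails. This is a defect of the construction (and of the statement for arbitrary $M$), not of your method.

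\textbf{Top level.} On $W_{\mathcal P}$, your first remedy is a hope rather than an argument. Set $R_i:=\mathop{\approx_{\mathcal P}}\circ\mathop{\sim_i}\circ\mathop{\approx_{\mathcal P}}$; this is exactly the relation defining $\Lambda^*_i$ there. As I understand the iHMS definition, implicit correspondences must be reflexive and stationary, since that is what yields Stationarity of the induced explicit correspondence. So you should check the definition, and you likely cannot drop transitivity of $R_i$. The fallback only proves the lemma for a smaller class of models. The quotient variant does not help either. The lifted relation on $W/\approx_{\mathcal P}$ is again $R_i$, so the quotient need not be an epistemic model with awareness. Replacing $R_i$ by its transitive closure can change the truth value of $[\mathop{\approx}]_iI_i[\mathop{\approx}]_i\varphi$, so the bisimilarity claim is unjustified.

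\textbf{What is needed.} The statement requires an extra hypothesis on $M$, for example both of the following:
\begin{itemize}
\item $R_i$ is transitive;
\item for every $\Phi\subseteq\mathcal P$, $w\approx_\Phi u$ implies $\{[x]_\Phi\mid (w,x)\in R_i\}=\{[x]_\Phi\mid (u,x)\in R_i\}$.
\end{itemize}
Under these two conditions, $\Lambda^*_i$ is well defined, reflexive and stationary on every $W_\Phi$. Projections preserve it because $m^\Phi_\Psi\circ m^{\mathcal P}_\Phi=m^{\mathcal P}_\Psi$, and the rest of your verification then goes through.
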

\begin{proof}
  \vspace{-3pt}
  The proof follows from the construction of the HMS-transform. 
\end{proof}

HMS models, including iHMS models, are structures in the event-based approach \cite{fagin1995reasoning}, which focuses on events and dispenses with formulae. To define a truth-preserving translation of the formulae, we need to introduce the satisfaction relation for HMS-transforms.
\begin{df}
  For any HMS-transforms $t^{M\rightarrow}(M)$ and $w^*\in W^*$, the satisfaction relation $\vDash_{HMS}$ is given as follows: 
  \begin{align*}
    t^{M\rightarrow}(M), w^* \vDash_{HMS} p &\textit{\  iff  \ } w^* \in v(p) ;\\[-3pt] 
    t^{M\rightarrow}(M), w^* \vDash_{HMS} \neg \varphi &\textit{\  iff  \ } w^*\in \neg\dmid\varphi\dmid;\\[-3pt]
    t^{M\rightarrow}(M), w^* \vDash_{HMS} \varphi\wedge\psi &\textit{\  iff  \ } w^*\in\dmid\varphi\dmid\cap\dmid\psi\dmid; \\[-3pt] 
    t^{M\rightarrow}(M), w^* \vDash_{HMS} A_{i} \varphi &\textit{\  iff  \ } w^*\in e_{A_i}(\dmid\varphi\dmid);\\[-3pt]
    t^{M\rightarrow}(M), w^* \vDash_{HMS} I_i\varphi &\textit{\  iff  \ } w^*\in e_{I_i}(\dmid\varphi\dmid).
  \end{align*}
  where $\dmid\varphi\dmid \coloneqq \{w^*\mid t^{M\rightarrow}(M), w^* \vDash_{HMS}\varphi\}$ for all $\varphi\in\mathcal{L}^*_{HMS}$.
\end{df}

In order for the satisfaction relation to be well-defined, we must prove that the truth set $\dmid \varphi\dmid$ of every formula is an event in HMS-transforms.
\begin{lemma}
  For every $\varphi\in\mathcal{L}^*_{HMS}$ and HMS-transform $t^{M\to}(M)$, $\dmid\varphi\dmid$ is a $W_{At(\varphi)}$-based event.
\end{lemma}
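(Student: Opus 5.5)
Recall that in iHMS models an event is a pair $(B,S)$, written $B^{\uparrow}$-style, where $S$ is a state space and $B\subseteq S$. The event is $\{w^*\in W^* : w^*\in W_{\Phi}$ for some $\Phi\supseteq \Psi$ and $m^{\Phi}_{\Psi}(w^*)\in B\}$, with $\Psi$ the index of $S$. An event is \emph{$S$-based}, or $W_\Psi$-based, if it has this form. The operators on events are as in \cite{belardinelli2024implicit}: negation $\neg(B^{\uparrow}) = (S\setminus B)^{\uparrow}$, conjunction is taken at the join of the two bases, and $e_{A_i}$, $e_{I_i}$ are defined via $\alpha_i$ and $\Lambda_i$.

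I will argue by structural induction on $\varphi\in\mathcal{L}^*_{HMS}$, first for the propositional part $\mathcal{L}_{PL}$ and then for the two modal clauses. The claim to establish is $\dmid\varphi\dmid = B_\varphi^{\uparrow}$ with $B_\varphi\subseteq W_{At(\varphi)}$. The natural candidate is $B_\varphi=\{[w]_{At(\varphi)} \mid M,w\vDash_{AIL^*}\varphi\}$, which is well defined because the truth of a propositional $\varphi$ depends only on the atoms in $At(\varphi)$.

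For the atomic case, $\dmid p\dmid = v^*(p)$ collects all classes $[w]_\Phi$ with $p\in\Phi$ and $w\in V(p)$. Since $m^{\Phi}_{\{p\}}([w]_\Phi)=[w]_{\{p\}}$ and $\approx_\Phi$ refines $\approx_{\{p\}}$ for $p\in\Phi$, this equals $\{[w]_{\{p\}}\mid w\in V(p)\}^{\uparrow}$. For negation, the definition of $\neg$ on events gives $\neg(B_\varphi^{\uparrow}) = (W_{At(\varphi)}\setminus B_\varphi)^{\uparrow}$, which is $W_{At(\neg\varphi)}$-based because $At(\neg\varphi)=At(\varphi)$. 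For conjunction, the two events are $B_\varphi^{\uparrow}$ and $B_\psi^{\uparrow}$ with bases $W_{At(\varphi)}$ and $W_{At(\psi)}$. Their intersection is upward closed from $W_{At(\varphi)\cup At(\psi)}$ with base $\{x\in W_{At(\varphi)\cup At(\psi)} \mid m(x)\in B_\varphi,\ m(x)\in B_\psi\}$. Here I use the composition property $m^{\Phi}_{\Psi}\circ m^{\Upsilon}_{\Phi}=m^{\Upsilon}_{\Psi}$, which is immediate because every projection sends $[w]$ to $[w]$.

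For the modal clauses, I invoke the results of \cite{belardinelli2024implicit}, which are presumably valid in every iHMS model and hence, by the previous lemma, in $t^{M\to}(M)$. The relevant fact is that for a $S$-based event $E$, both $e_{A_i}(E)$ and $e_{I_i}(E)$ are $S$-based events. Applied with $S=W_{At(\varphi)}$, this finishes the induction.

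The main obstacle is checking that the conjunction and the modal operators land exactly on base $W_{At(\varphi)}$ rather than on some other space. For conjunction this needs the explicit upward-closure computation above, which relies on the projection composition and on surjectivity of the $m$'s. For the modal operators, if the cited result is not available in the needed form, I would verify it directly. In the HMS-transform $\alpha^*_i([w]_\Upsilon)=W_{\mathscr{A}_i(w)\cap\Upsilon}$ with $\mathscr{A}_i$ constant. So $e_{A_i}(E)$ consists of the states whose awareness space lies above the base, that is, the classes $[w]_\Upsilon$ with $\Upsilon\supseteq At(\varphi)$, provided $At(\varphi)\subseteq\mathscr{A}_i$, and is empty otherwise. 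Either way it is an upward-closed, $W_{At(\varphi)}$-based set: the empty event with that base, or the base taken whole. Finally, $e_{I_i}$ is defined via $\Lambda^*_i$, which commutes with projections by its definition through $m^{\mathcal{P}}_\Phi$, and this preserves the base.
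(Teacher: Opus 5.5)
Your proof is correct and follows the paper's route. It proceeds by induction on $\varphi$ and produces at each step a base $B\subseteq W_{At(\varphi)}$ whose upward closure $\bigcup_{At(\varphi)\subseteq\Phi\subseteq\mathcal{P}}(m^{\Phi}_{At(\varphi)})^{-1}(B)$ is $\dmid\varphi\dmid$; your $A_i$ base (all of $W_{At(\varphi)}$ if $At(\varphi)\subseteq\mathscr{A}_i$, empty otherwise) coincides with the paper's $\{w^*\in W_{At(\varphi)}\mid W_{At(\varphi)}\leq\alpha^*_i(w^*)\}$, and you spell out the propositional cases the paper leaves implicit. As in the paper, the $I_i$ case rests on $\Lambda^*_i$ commuting with the projections, which needs $\Lambda^*_i([w]_{\Phi})$, defined through the class $[w]_{\mathcal{P}}$ of a chosen representative, to be the same for every $w\in[w]_{\Phi}$; this can fail for a general $\sim_i$, so that step is only as secure as Lemma 1, which you and the paper both take for granted.
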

\begin{proof}
\vspace{-3pt}
  The proof is by induction on the structures of $\varphi$. It is sufficient to show that there exists $B\subseteq W_{At(\varphi)}$ such that $\dmid\varphi\dmid = \bigcup_{At(\varphi)\subseteq\Phi\subseteq\mathcal{P}}(m^{\Phi}_{At(\varphi)})^{-1}(B)$ for each case, where $m^{\Phi}_{\Psi}(B)\coloneqq\{m^{\Phi}_{\Psi}([w]_{\Phi})\mid [w]_{\Phi}\in B\}$ for $B\subseteq W_{\Phi}$. Such $B$ is obtained by taking $\{w^*\in W_{At(\psi)}\mid W_{At(\psi)}\leq\alpha^*_i(w^*)\}$ for $A_i\psi$, and by taking $\bigcup_{w^*\in W^*}\{\Lambda^*_i(w^*)\mid \Lambda^*_i(w^*)\subseteq \dmid \psi \dmid \cap W_{At(\psi)}\}$ for $I_i\psi$.
\end{proof}

The next task is to define a translation of formulae in $\mathcal{L}^*_{AIL}$ into $\mathcal{L}^*_{HMS}$. 
\begin{df}
  The translation $t: \mathcal{L}^*_{AIL}\to\mathcal{L}^*_{HMS}$ is defined as follows:
  \begin{align*}
    & t(p) = p; &  & t(\neg \varphi) = \neg t(\varphi);\\
    & t(\varphi\wedge\psi) = t(\varphi)\wedge t(\psi); &  &  t(A^{AIL}_i\varphi) = A^{HMS}_i t(\varphi);\\
    & t([\mathop{\approx}]_i I^{AIL}_i[\mathop{\approx}]_i\varphi) = I^{HMS}_i(t(\varphi)).
  \end{align*}
\end{df}
\noindent
The translation replaces each occurrence of the $[\mathop{\approx}]_i I_i [\mathop{\approx}]_i$ with the $I_i$ operator.

Finally, we prove that the translation is truth-preserving between an AIL model and its HMS-transform. For every $\varphi \in \mathcal{L}^*_{AIL}$, we say that $\varphi$ satisfies \textit{A-condition} at $w\in W$ precisely when, if $\varphi$ is of the form $A_i\psi$ or $[\mathop{\approx}]_i I_i [\mathop{\approx}]_i \psi$ then $At(\psi) = \mathscr{A}_i(w)$.
\begin{thm}
  Let $M$ be an epistemic model with awareness such that $\mathscr{A}_i(w) = \mathscr{A}_i(v)$ for every $i$ and for all $w,v\in W$, $t^{M\rightarrow}(M)$ be its HMS-transform, and $w\in W$. For every $\varphi\in\mathcal{L}^*_{AIL}$, if $\varphi$ satisfies A-condition at $w$, then $M,w\vDash_{AIL} \varphi \textit{\  iff \ }$ $t^{M\rightarrow}(M),[w]_{At(\varphi)}\vDash_{HMS}t(\varphi)$.
\end{thm}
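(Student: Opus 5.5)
We argue by cases on the outermost form of $\varphi$, since $\mathcal{L}^*_{AIL}$ has no nesting of modal operators. There are three cases: $\varphi\in\mathcal{L}_{PL}(\mathcal{P})$, $\varphi=A_i\psi$, and $\varphi=[\mathop{\approx}]_iI_i[\mathop{\approx}]_i\psi$, with $\psi\in\mathcal{L}_{PL}(\mathcal{P})$ in the last two. Throughout we use the previous lemma: $\dmid t(\varphi)\dmid$ is a $W_{At(\varphi)}$-based event. This lets us move between the levels $W_\Phi$ with $At(\varphi)\subseteq\Phi$ by the projections $m^{\Phi}_{At(\varphi)}$.

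\textbf{Propositional case.} We first prove an auxiliary claim by induction on $\psi\in\mathcal{L}_{PL}(\mathcal{P})$:
\[
M,w\vDash_{AIL^*}\psi \text{ iff } t^{M\rightarrow}(M),[w]_\Phi\vDash_{HMS}\psi \text{ for every } \Phi\supseteq At(\psi).
\]
For atoms, $[w]_\Phi\in v^*(p)$ iff $w\in V(p)$ when $p\in\Phi$; this holds because $\approx_\Phi$ respects $p$, so the choice of representative does not matter. For $\neg$, we use that the HMS negation of a $W_{At(\psi)}$-based event consists of the upward closure of the complement at level $At(\psi)$; the level $[w]_\Phi$ lies in that closure since $\Phi\supseteq At(\psi)$. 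Conjunction is handled similarly: both events are based at levels below $\Phi$, and their intersection is based at the union of those levels. The main subtlety here is bookkeeping the base spaces under $\neg$ and $\wedge$, which is why we state the claim for all $\Phi\supseteq At(\psi)$. Taking $\Phi=At(\varphi)$ gives the propositional case of the theorem.

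\textbf{Awareness case.} Let $\varphi=A_i\psi$, so $At(\varphi)=At(\psi)$. By the A-condition, $At(\psi)=\mathscr{A}_i(w)$, so $M,w\vDash A_i\psi$ holds. On the HMS side, $\alpha^*_i([w]_{At(\psi)})=W_{\mathscr{A}_i(w)\cap At(\psi)}=W_{At(\psi)}$. By the $B$ from the previous lemma, this means $[w]_{At(\psi)}\in e_{A_i}(\dmid\psi\dmid)$. Both sides are therefore true, and the biconditional holds.

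\textbf{Knowledge case.} This is the main obstacle. Let $\Phi=At(\psi)=\mathscr{A}_i(w)$. Since awareness is constant, $\approx_i$ coincides with $\approx_\Phi$. The AIL side then says that $\psi$ holds at every $v$ with $(w,v)\in\approx_\Phi\circ\sim_i\circ\approx_\Phi$. By the propositional claim, this is equivalent to: every $[v]_\Phi$ with $w'\sim_i v'$ for some $w'\in[w]_\Phi$, $v'\in[v]_\Phi$ lies in $\dmid\psi\dmid$.

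On the HMS side, $\Lambda^*_i([w]_\Phi)=m^{\mathcal{P}}_\Phi(\Lambda^*_i([w]_{\mathcal{P}}))$. This set collects the classes $[v]_\Phi$ with $w''\sim_i v''$ for some $w''\approx_{\mathcal{P}}w$ and $v''\approx_{\mathcal{P}} v$. I will prove that these two relations on $W_\Phi$ agree, or at least that they agree on the relevant truth conditions.

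The inclusion $\Lambda^*_i\subseteq$ composite is immediate, because $\approx_{\mathcal{P}}\subseteq\approx_\Phi$. For the converse, the $\approx_\Phi$-steps might leave the $\approx_{\mathcal{P}}$ class of $w$, so the two relations may differ. If they do, I expect to use the definition of $e_{I_i}$ in \cite{belardinelli2024implicit}, where implicit knowledge at $[w]_\Phi$ may be evaluated via $\Lambda^*_i$ on the whole $\approx_\Phi$-fibre $(m^{\mathcal{P}}_\Phi)^{-1}([w]_\Phi)$. Failing that, I would use the event witness $B=\bigcup\{\Lambda^*_i(w^*)\mid \Lambda^*_i(w^*)\subseteq\dmid\psi\dmid\cap W_\Phi\}$ from the previous lemma. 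Either way, membership of $[w]_\Phi$ is governed by the upper-level preimages, which reintroduces the $\approx_\Phi$ prefix and suffix and closes the equivalence.
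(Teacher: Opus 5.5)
Your propositional and awareness cases are fine and follow the paper. Strengthening the propositional claim to all $\Phi\supseteq At(\psi)$ is the right bookkeeping.

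\textbf{The gap.} It is in the knowledge case, exactly where you stop. The inclusion you prove only yields $M,w\vDash_{AIL^*}[\mathop{\approx}]_iI_i[\mathop{\approx}]_i\psi\Rightarrow t^{M\rightarrow}(M),[w]_\Phi\vDash_{HMS}I_i\psi$. The converse needs every $[v']_\Phi$ with $w'\approx_\Phi w$ and $w'\sim_i v'$ to lie in $\Lambda^*_i([w]_\Phi)$, and for this you offer only conjectures. None of them works as stated:
\begin{itemize}
\item The implicit knowledge operator of iHMS models is pointwise ($w^*\in e_{I_i}(E)$ iff $\Lambda^*_i(w^*)\subseteq E$), so nothing evaluates $[w]_\Phi$ on its fibre.
\item Your $B$, a union of possibility sets contained in $\dmid\psi\dmid$, is a subset of $\dmid\psi\dmid$, not the base of $\dmid I_i\psi\dmid$. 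It says nothing about $\sim_i$-successors of other members of $[w]_\Phi$.
\item ``Membership of $[w]_\Phi$ is governed by its upper-level preimages'' is available only once $\dmid I_i\psi\dmid$ is known to be $W_\Phi$-based. Lemma 2 for $I_i\psi$ rests on the very representative-independence that is missing here, so it cannot rescue the case ``if they differ''.
\end{itemize}

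\textbf{The missing idea.} The two relations cannot differ. $\Lambda^*_i([w]_\Phi)$ is defined as $m^{\mathcal{P}}_\Phi(\Lambda^*_i([w]_{\mathcal{P}}))$ through a representative $w$. For $\Lambda^*_i$ to be a function on $W^*$ at all (as the paper presupposes in declaring $\Lambda^*_i\colon W^*\to 2^{W^*}\setminus\{\emptyset\}$ and in Lemma 1), this set must be the same for every $w'\in[w]_\Phi$. Then for $w'\in[w]_\Phi$ and $w'\sim_i v'$ you get $[v']_\Phi\in m^{\mathcal{P}}_\Phi(\Lambda^*_i([w']_{\mathcal{P}}))=\Lambda^*_i([w']_\Phi)=\Lambda^*_i([w]_\Phi)$. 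So $\Lambda^*_i([w]_\Phi)$ is exactly the $\mathop{\approx_i}\circ\mathop{\sim_i}\circ\mathop{\approx_i}$-image of the class, and the case is immediate. This is all the paper's one-line remark (``$[w]_{At(\varphi)}$ is defined by $\approx_i$ and $\Lambda^*_i$ by $\sim_i$'') uses.

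\textbf{Why this is not cosmetic.} Take $\mathcal{P}=\{p,q\}$, $W=\{w_1,w_2,w_3\}$, $V(p)=\{w_1,w_2\}$, $V(q)=\{w_1\}$, $\sim_i$-classes $\{w_1\}$ and $\{w_2,w_3\}$, and $\mathscr{A}_i\equiv\{p\}$. Then $M,w_1\nvDash_{AIL^*}[\mathop{\approx}]_iI_i[\mathop{\approx}]_ip$ via $w_1\approx_i w_2\sim_i w_3$. Yet computing $\Lambda^*_i([w_1]_{\{p\}})$ through the representative $w_1$ gives $\{[w_1]_{\{p\}}\}\subseteq\dmid p\dmid$, while through $w_2$ it gives $\{[w_1]_{\{p\}},[w_3]_{\{p\}}\}$. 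So the step genuinely depends on representative-independence, which is a real constraint on $M$. Your proof must invoke it explicitly rather than hope the operator definitions absorb the discrepancy.
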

\begin{proof}
  \vspace{-3pt}
  The proof is by induction on the structure of $\varphi$, using Lemma 2. For the case of $A_i\psi$, this immediately follows from A-condition. For the case of $[\mathop{\approx}]_i I_i [\mathop{\approx}]_i\psi$, the proof follows from that $[w]_{At([\mathop{\approx}]_i I_i [\mathop{\approx}]_i\psi)}$ is defined by $\mathop{\approx}_i$ and that $\Lambda^*_i$ is defined by $\mathop{\sim}_i$. 
\end{proof}
The result implies that the $[\mathop{\approx}]_i I_i [\mathop{\approx}]_i$ operator in an AIL model is reduced to the $I_i$ operator at the subjective state of $i$ in the corresponding iHMS model under some assumptions. 
This is ascribed to the structure of iHMS models, in which the implicit knowledge is evaluated within states that appear under a vocabulary restriction imposed by an agent's awareness. In AIL, in contrast, the knowledge is defined independently of an agent's awareness, with each agent's subjective states being expressed through a relation $\mathop{\approx}_i$. In other words, while iHMS models internalize an agent's viewpoint into a state (or an evaluation point), AIL models externalize it as a relation $\mathop{\approx}_i$ between worlds. Hence, the $[\mathop{\approx}]_i$ operator serves as an indicator of an agent's viewpoint within iHMS models, which means from whose viewpoint a formula is evaluated.

\section{Conclusion}
In this paper, we defined the transformation of an AIL model into an HMS model and then proved that the translation between the fragments of the AIL language preserves truth within this transformation. Thus, we demonstrated that the $[\mathop{\approx}]_i I_i [\mathop{\approx}]_i$ operator in an AIL model is reduced to the $I_i$ operator at the subjective state of $i$ in the corresponding iHMS model under some assumptions. This work lays the groundwork for a comparative analysis between the two model classes.

Our result implies that information about whose viewpoint a state is from, which an evaluation point itself has in iHMS models, is externalized by the $[\mathop{\approx}]_i$ operator in AIL; the $[\mathop{\approx}]_i$ operator serves as an indicator of an agent's viewpoint within iHMS models. Furthermore, our truth-preserving translation shows that iHMS models capture a different notion of implicit knowledge than AIL does.

In iHMS models, explicit knowledge coincides with the intersection of implicit knowledge and awareness \cite{belardinelli2024implicit}. It is left for future work to explore the formal comparison between explicit knowledge in iHMS models and that in AIL.

\subsubsection{\ackname}
We thank Satoshi Tojo and Nobu-Yuki Suzuki for their helpful comments.

\bibliography{refs}
\end{document}